\theoremstyle{plain}
\newtheorem{alg}{Algorithm}
\newtheorem{proposition}[alg]{Proposition}
\theoremstyle{definition}
\newtheorem{definition}[alg]{Definition}
\begin{document}
%
\title{An Algebraic View to Gradient Descent Decoding}

\author{\IEEEauthorblockN{M. Borges Quintana and M.A. Borges Trenard}
\IEEEauthorblockA{Facultad de Matem\'atica y Computaci\'on\\
Universidad de Oriente\\
Santiago de Cuba, Cuba\\
{\tt mijail@mbq.uo.edu.cu \, mborges@mabt.uo.edu.cu}}
\and

\and
\IEEEauthorblockN{I. M\'arquez-Corbella and E. Mart\'{\i}nez-Moro}
\IEEEauthorblockA{SINGACOM group\\
Universidad de Valladolid\\
Castilla, Spain\\
{\tt http://www.singacom.uva.es}\\
{\tt imarquez@agt.uva.es\, edgar@maf.uva.es}
}}


%


\maketitle

\begin{abstract}
There are  two gradient descent decoding procedures for binary codes proposed independently by Liebler and by   Ashikhmin and Barg.  Liebler  in his paper \cite{Liebler}  mentions that both algorithms have the same philosophy but in fact they are rather different. The purpose of this communication is to show that both algorithms can be seen as two ways of understanding the reduction process algebraic monoid structure related to the code.  The main tool used  for showing this is the Gr\"obner representation of the monoid associated to the linear code. 
\end{abstract}


%
\IEEEpeerreviewmaketitle

\section{Introduction}
From now on  a code $\mathcal C$ will be a binary linear code of length $n$ and dimension $k$, i.e. a $k$-dimensional linear subspace of $\mathbb F_2^n$ where $\mathbb F_2$ is the field of two elements. Let $\mathrm{d}(\cdot,\cdot)$, $\mathrm{wt}(\cdot)$ be the Hamming distance and the Hamming weight on $\mathbb F_2^n$ respectively. Let $d$ denote the minimal  Hamming distance of the code $\mathcal C$. 

Given a code $\mathcal C$  and let $\mathbf r$ be a received word in $\mathbb F_2^n$ the \emph{complete decoding problem} (CDP) addresses to determine a codeword  $\mathbf c\in\mathcal C$ that is closest to $\mathbf r$. The \emph{$t$-bounded  distance decoding problem} ($t$-BDP) is to determine a codeword  $\mathbf c\in\mathcal C$ such that   $d(\mathbf r,\mathbf c)\leq t$ if such codeword exists. If $t=\lfloor\frac{(d-1)}{2}\rfloor$ then the solution of the $t$-BDP is unique and if $t=\rho$ the covering radius then $t$-BDP is the same as CDP. Both problems are quite related to the \emph{coset weights problem} ($t$-CWP) that can be stated as follows, given a binary  $r\times n$  matrix and an $r$-dimensional vector $\mathbf s$ and $t\in \mathbb Z_{\geq 0}$, does a binary vector $\mathbf e\in \mathbb F_2^n$ exist such that $w(\mathbf e)\leq t$ and $H\mathbf e=\mathbf s$?
All these problems  have been shown to be  NP-complete \cite{Barg,Berle} even if preprocessing is allowed    \cite{BN}. 

Recently complete decoding and particularly \emph{gradient descent complete decoding} have gain new interest  related to the decoding of LDPC codes, in fact Liebler in \cite{Liebler} says that there is not a clear answer to the question of which parameters of a code could help to recognize and to implement a gradient descent decoding function for the code having the the coset leaders as output. Moreover in the same paper the author makes a distinction between two gradient descent decoding  algorithms (GDDA)  that we will denote by leader GDDA (l-GDDA) and test-set GDDA (ts-GDDA) that are claimed to be different (see section \ref{sec:GDDA} for  formal definitions of the algorithms).

The purpose of this work is to show that both algorithms can be seen as two ways of understanding the reduction process within  algebraic monoid structure related to the code. For that aim the main tool used will  the Gr\"obner representation of the monoid associated to the linear code \cite{BBM}. The structure of the paper will be as follows. Section \ref{sec:GDDA} will show the two gradient descent decoding algoritms, Section \ref{sec:GR} will give a brief review to the Gr\"obner representation of a code and its associate structures. Secion \ref{sec:Reduction} will show the main result, i.e. how the two GDD algorithms can be seen as reduction associated to the  Gr\"obner representation of the code.

\section{Gradient descent decoding  algorithms}\label{sec:GDDA}

In this section we will briefly describe two gradient descent decoding  algorithms. The first one will be the  \emph{leader GDDA} and can be stated as follows. Let us denote by $\overline{\mathbf y}$ the coset in $\mathbb F_ 2^n/\mathcal C$ containing ${\mathbf y}$  and $\mathrm{wt}(\overline{\mathbf y})$ the weight of one of its leaders.

\begin{alg}\label{alg:l} l-GDDA
\begin{enumerate}
 \item[] \textbf{Input:} $\mathbf r$ the received word.
 \item[] \textbf{Output:} A codeword  $\mathbf c\in\mathcal C$ that is closest to $\mathbf r$
 \item[] Repeat until $\mathrm{wt}(\overline{\mathbf r})=0$
 \begin{enumerate}
                                                  \item Compute $\mathbf r^\prime$ such that $\mathrm{wt}(\mathbf r-\mathbf r^\prime)=1 $ 
                                                  and \\ $\mathrm{wt}(\overline{\mathbf r}) \geq \mathrm{wt}(\overline{\mathbf r^\prime}) $
                                                 
                                               \item
                                                  $\mathbf r \leftarrow \mathbf r^\prime$
                                                 \end{enumerate}
\item[] \textbf{Return} $\mathbf c=\mathbf r$.

\end{enumerate}
\end{alg}

Note that in each of the steps of the algorithm the vector $\mathbf r$ changes between different cosets of 
$\mathbb F_ 2^n/\mathcal C$  until it arrives to the $ \overline{\mathbf 0}$ coset, i.e. the code itself.
This is essentially the same as syndrome decoding broken up in smaller steps. The paper \cite{Liebler} presents the first such
construction method of  a gradient function $\gamma: \mathbb F_ 2^n/\mathcal C\rightarrow \mathbb Z$ such that is a strictly increasing function of $\mathrm{wt}(\overline{\mathbf m})$ for performing such a  l-GDDA. 

For understanding the next GDD algorithm we will need some knowledge of minimal (support) codewords. The \emph{support of a codeword} $\mathbf c\in \mathcal C$ will be the set of its non-zero positions, i.e.
$\mathrm{supp}(\mathbf c)=\{i\mid \mathbf c_i\neq 0\}$.
\begin{definition}
 A codeword $\mathbf m$ in the code $\mathcal C$ is said to be minimal if there is no other codeword $\mathbf c\in \mathcal C$ such that 
$$\mathrm{supp}(\mathbf c)\subseteq\mathrm{supp}(\mathbf m).$$
\end{definition}
 We will denote by $\mathcal M_\mathcal C$ the set of all the minimal codewords of $\mathcal C$.  The usual way of defining a steepest descent method in the Hamming space is to construct a  \emph{test set} $\mathcal T\subseteq \mathbb F_2^n$. A test set is a set of codewords such that every word $\mathbf y$ either lies in $V(\mathbf 0)$  (the Voronoy region of the all-zero vector) or there is a $\mathbf t\in \mathcal T$ such that $\mathrm{wt}(\mathbf y-\mathbf t)< \mathrm{wt}(\mathbf y).$ The gradient-like or 
 \emph{test set GDDA} is stated as follows (see \cite{Barg} for further details and correctness of the algorithm)
\begin{alg}\label{alg:ts} ts-GDDA
\begin{enumerate}
 \item[] \textbf{Input:} $\mathbf r$ the received word.
 \item[] \textbf{Output:} A codeword  $\mathbf c\in\mathcal C$ that is closest to $\mathbf r$
\item[] $\mathbf c\leftarrow\mathbf 0$ 
\item[] Repeat until  no $\mathbf t\in\mathcal T$ is found such that
 $$\mathrm{wt}(\mathbf r- \mathbf t)< \mathrm{wt}(\mathbf r)$$
 \begin{enumerate}
                                                  \item $\mathbf c \leftarrow\mathbf c+\mathbf t$
                                                 
                                               \item
                                                  $\mathbf r \leftarrow \mathbf r-\mathbf t$
                                                 \end{enumerate}
\item[] \textbf{Return} $\mathbf c$.

\end{enumerate}
\end{alg}

It is pointed in \cite{Barg} that setting $\mathcal T=\mathcal M_\mathcal C$ in the previous Algorithm~\ref{alg:ts} the so call minimal vector algorithm performs  complete minimum distance decoding.

\section{Gr\"obner representation and related structures}\label{sec:GR}

In this section we will show some basic results on the Gr\"obner representation of a code $\mathcal C$. In fact it is related to the additive representation of the  monoid $\mathbb F_2^n/\mathcal C$. We will try to keep this section as Gr\"obner basis technology-free as possible.  For some references on Gr\"obner representations of codes  and its implementations see \cite{BBM, bbm2, bbm3, bbfm, GBLA}. Let $\mathbf e_i\in \mathbb F_2^n$ be the vector with all its entries $0$ but a $1$ in the $i$th-position.

\begin{definition}\label{def:gr} 
 A \emph{Gr\"obner representation} of  $\mathbb F_2^n/\mathcal C$ is a pair $N,\phi$ where $N$ is a transversal of the cosets in  $\mathbb F_2^n/\mathcal C$ (i.e. one element of each coset)  such that $\mathbf 0\in N$ and for each $\mathbf n\in N\setminus \{\mathbf 0 \}$ there exists a $\mathbf e_i$, $i\in \{1,2,\ldots , n\}$ such that $\mathbf n=\mathbf n^\prime + \mathbf e_i$ with $\mathbf n^\prime\in N$ and a mapping $$\phi: N\times  \{\mathbf e_1,\mathbf e_2,\ldots , \mathbf e_n\}\rightarrow N$$ such that the image of the pair $(\mathbf n, \mathbf e_i)$ is the element representing the coset that contains $\mathbf n+ \mathbf e_i$.
\end{definition}

The  word Gr\"obner is not casual as we will see it with the following construction. Let us consider the binomial ideal 
\begin{equation}
 \mathcal I_{\mathcal C}=\left\langle \left\lbrace \mathbf x^{\blacktriangle\mathbf w_1}-\mathbf x^{\blacktriangle\mathbf w_2}\mid \mathbf w_1-\mathbf w_2\in \mathcal C   \right\rbrace\right\rangle \subseteq {\mathbb K}[x_1,\ldots , x_n]
\end{equation}
where the characteristic crossing function $\blacktriangle:  \mathbb Z_2^n \rightarrow \mathbb Z^n$ replaces the class of  $0,1$ by the same symbols regarded as
integers; ${\mathbb K}$ is an arbitrary field and  if  $\blacktriangle\mathbf w=(w_1,\ldots, w_n)$ then  $ \mathbf x^{ \blacktriangle\mathbf w}=\prod x_i^{w_i}$. If we consider a degree compatible ordering $\prec$ and we compute a Gr\"obner basis $\mathcal G_\prec$ w.r.t.  $\prec$ of the ideal $ \mathcal I_{\mathcal C}$  the normal form of any monomial  $\prod x_i^{w_i}$  corresponds with the syndrome of  the word $\blacktriangledown   (w_1,\ldots, w_n)$ where the map  $\blacktriangledown$ is reduction modulo $2$. Thus we can take $N$ in Definition~\ref{def:gr} as the vectors  $\blacktriangledown   (w_1,\ldots, w_n)$ such that $\prod x_i^{w_i}$ is a normal form w.r.t. $\mathcal G_\prec$, i.e. the syndromes of the code. Note also that $\phi$ is just  given by the  multiplication tables of the normal forms times the variables $x_i$ in the ring  ${\mathbb K}[x_1,\ldots , x_n]/\mathcal I_{\mathcal C}$. This is standar way of representing the quotient by  an ideal $\mathcal I_{\mathcal C}$ using the FGLM  algorithm (see \cite{Mora} chapter 29  for a complete reference on Gr\"obner basis topics).  Moreover, the Gr\"obner representation of a code can be computed with a modification of the   FGLM algorithm \cite{bbm2}, one implementation in  GAP \cite{GAP} of this algorithm can be found in \cite{GBLA}.  

The binomial ideal  $\mathcal I_{\mathcal C}$ can be seen also as a kernel of a modular integer  linear program problem stated as follows.
Let $H\in \mathbb Z^{m\times n}$  be a $m\times n$ matrix such that $\blacktriangledown H $ is a parity check matrix of $\mathcal C$ and $\mathbf b\in \mathbb Z^m$.
  \begin{equation}\label{eq:ip}
 IP_{H}(\mathbf b)\equiv\left\lbrace
 \begin{array}{l}
  \min\left\lbrace {(1,1,\ldots,1)\cdot \mathbf u^t}\right\rbrace \\
\mathbf u\in \mathbb Z_{\geq 0}\\
H\cdot \mathbf u^t=\mathbf b\, \mathrm{mod} \,2.
 \end{array}
 \right. 
\end{equation}
 Ikegami and Kaji \cite{IK} studied the kernel of this problem  related with the maximum likelihood decoding problem. It has been also studied in \cite{MM} in order to describe the combinatorics of the minimal codewords of the code $\mathcal C$.
 
Associated to the Gr\"obner representation we can define the \emph{border of a code} \cite{Border} as follows

\begin{definition} Let  $\mathcal C$ be a code and  $H$  a parity check matrix of $\mathcal C$, let  $(N,\phi)$ be a {Gr\"obner representation} of  $\mathbb F_2^n/\mathcal C$. Then the  \emph{ border of the code } $\mathcal C$ w.r.t. $(N,\phi)$  is the set 
\begin{equation}\label{def:bor1}
 \begin{array}{rl}
B(\mathcal C)= &\left\lbrace (\mathbf n_1+\mathbf e_i,\mathbf n_2)\mid i\in \{1,\ldots n\}, \mathbf n_1+\mathbf e_i\neq\mathbf n_2,\right.  \\ 
& \quad \left. \mathbf n_1,\mathbf n_2\in N \hbox{ and }  H\cdot (\mathbf n_1+\mathbf e_i)=H\cdot \mathbf n_2\right\rbrace , 
\end{array}
\end{equation}
\end{definition}

An important remark is that both components of an element in the set $B(\mathcal C)$ are in the same coset, i.e. their   sum is in the code. We can also describe the border in  as
\begin{equation}\label{def:bor2}
B(\mathcal C)= \left\lbrace \left( \mathbf n+\mathbf e_i,\phi(\mathbf n,\mathbf e_i)\right) \mid i\in \{1,\ldots n\},  \mathbf n\in N \right\rbrace \setminus \{(\mathbf x, \mathbf x) \}.
\end{equation}

The border of a code $B(\mathcal C)$  is associate to the  border basis of the ideal $\mathcal I_\mathcal C$. The conection between the ideal $\mathcal G_\prec$ comes from the well known fact that 
that every Gr\"obner basis with respect to a degree-compatible term ordering can be
extended to a border basis (see \cite[p. 281ff]{Border2}) but not every border basis is an extension of a Gr\"obner
basis.   The preference of border bases over Gr\"obner bases in our case arises from the iterative generation of
linear syzygies, inherent in the  linear algebra algorithm used in \cite{bbfm}, which allows for successively approximating the
basis degree-by-degree, i.e. weight-by-weight.

\section{Gradient descent decoding and reduction}\label{sec:Reduction}

Given a code  $\mathcal C$  and its corresponding  Gr\"obner representation $(N,\phi)$ we can accomplish two types of reduction that we will see are associated to Algorithms~\ref{alg:l},~\ref{alg:ts} above. Thus both algorithms obey to  the same algebraic structure. 

\subsection{Reduction by $\phi$} We shall define the reduction of an element   $\mathbf n\in N$ w.r.t. $\mathbf e_i$ as the element   $\mathbf n^\prime=\phi (\mathbf n, \mathbf e_i)$  and we will denote it by   $\mathbf n\rightarrow_i \mathbf n^\prime$.  For each  $\mathbf y\in \mathbb F_2^n$, $\mathbf y= \mathbf 0+\sum_j \mathbf e_{i_j}$ for some $i_j\in \{1,\ldots n\}$, thus we can iterate a finite number of reductions to find the representative of the coset $\overline{\mathbf y}$  containing $\mathbf y$. Note that in the case that we use $\prec$ defined above the representatives of the classes corresponds with  coset leaders,  we will consider that this is the case from now on. This gives us the following gradient descent decoding algorithm. 

\begin{alg}\label{alg:phi} $(N,\phi)$-reduction GDDA
\begin{enumerate} 
 
 \item[] \textbf{Input:} $\mathbf r$ the received word.
 \item[] \textbf{Output:} A codeword  $\mathbf c\in\mathcal C$ that is closest to $\mathbf r$\\[0.05em]
\item[] \textbf{Forward step} 
\item[] $\mathbf r=\sum_{j=1}^s\mathbf e_{i_j}$. Compute $\mathbf n\in N$ corresponding to the coset  $\overline{\mathbf r}$, i.e.
\begin{enumerate}
\item $\mathbf n=\mathbf 0$.
 \item For $j=1,...,s$ do $$\mathbf n \rightarrow_{i_j}\mathbf n^\prime, \quad \mathbf n \leftarrow \mathbf n^\prime$$
\end{enumerate}
\item[] \textbf{Backward step} 

\item[] While $\mathbf n\neq \mathbf 0$
 \begin{enumerate}
                                                  \item Compute  $\mathbf r^\prime$ such that $\mathbf r^\prime=\mathbf r +\mathbf e_{i_j}$ and $$\mathrm{wt}(\mathbf n) \geq \mathrm{wt}(\phi(\mathbf n, \mathbf e_{i_j})) $$
                                                  \item $\mathbf r \leftarrow \mathbf r^\prime$,$\quad \mathbf n \leftarrow \phi(\mathbf n, \mathbf e_{i_j})$.
                                                 \end{enumerate}
 \item[] {\textbf{Return:}} $\mathbf c=\mathbf r$.
\end{enumerate}
\end{alg}

Note that the previous algorithm is somehow redundant, since at the end of the forward step we end with the coset leader $\mathbf n$ of the class $\overline{\mathbf r}$, thus we can decode without performing the forward step. Anyway we have staded  this way to see the resemblance with Algorithm~\ref{alg:l}. We can modify our  algorithm capturing the needed information of the  {Gr\"obner representation}  as follows.

\begin{definition}\label{def:gr+} 
 Let $(N,\phi)$ \emph{Gr\"obner representation} of  $\mathbb F_2^n/\mathcal C$ and $\{ \mathbf n_i\}_{i=1}^{2^{n-k}}$ an ordering on $N$ with $\mathbf n_1=\mathbf 0$.  We will denote by $(N^\star,\phi^\star)$ the following pair.
$$N^\star =\{ (i,w_i)\in\mathbb Z_{\geq 0}^2\mid w_i=\mathrm{wt(\mathbf n_i)},\, i=1,\ldots , 2^{n-k}\}$$
$$ \begin{array}{rccc}
\phi^\star: & N^\star\times \{\mathbf e_1,\mathbf e_2,\ldots , \mathbf e_n\} &\rightarrow  & N^\star\\
 & (i,w_i) & & \phi^\star ((i,w_i),\mathbf e_j)
\end{array}
$$
such that $\phi^\star ((i,w_i),\mathbf e_j)= (i_j,w_{i_j})$ if $\mathbf n_{i_j}= \phi(\mathbf n_i,\mathbf e_j)$ and $w_{i_j}=\mathrm{wt}(\mathbf n_{i_j})$.
\end{definition}
 
In other words, we keep track only on the ordering of the normal forms representing each coset and the weight of one of its leaders. Note that  $(N^\star,\phi^\star)$ can be easily computed from a Gr\"obner representation $(N,\phi)$  w.r.t. a degree compatible ordering $\prec$ since for $\prec$ the normal forms are coset leaders. Moreover, the way of computing a Gr\"obner representation by FGLM techniques gives us an incremental construction of $N^\star$ ordered non-decreasingly on the second component (see \cite{BBM} for further details), i.e.
$$(i,w_i),(j,w_j)\in N^\star \hbox{ and } i<j \Rightarrow  w_i\leq w_j.$$
Now it is clear that we can decode using only  $(N^\star,\phi^\star)$, thus we can avoid storing the normal forms in the Gr\"obner representation.

\begin{alg}\label{alg:phiestrella} $(N^\star,\phi^\star)$-reduction GDDA
\begin{enumerate} 
 
 \item[] \textbf{Input:} $\mathbf r$ the received word.
 \item[] \textbf{Output:} A codeword  $\mathbf c\in\mathcal C$ that is closest to $\mathbf r$\\[0.05em]
\item[] \textbf{Forward step} 
\item[] $\mathbf r=\sum_{j=1}^s\mathbf e_{i_j}$. Compute $\ell\in \{1,\ldots , 2^{n-k}\}$ corresponding to the coset  $\overline{\mathbf r}$, i.e.
\begin{enumerate}
\item $i= 1,\, w_1=0$.
 \item For $j=1,...,s$ do $$\phi^\star ((i,w_i),\mathbf e_{i_j})=(i^\prime,w_i^\prime), \quad (i,w_i) \leftarrow (i^\prime,w_i^\prime)$$
\end{enumerate}
\item[] Return $i=\ell$.
\item[] \textbf{Backward step} 

\item[] While $i\neq 1$
 \begin{enumerate}
                                                  \item Compute  $\mathbf r^\prime$ such that $\mathbf r^\prime=\mathbf r +\mathbf e_{i_j}$ and $$w_i \geq w_i^\prime$$  where $w_i^\prime$ is the second component of $\phi^\star ((i,w_i),\mathbf e_{i_j})$
                                                  \item $\mathbf r \leftarrow \mathbf r^\prime$,$\quad (i,w_i) \leftarrow \phi^\star ((i,w_i),\mathbf e_{i_j})$.
                                                 \end{enumerate}
 \item[] {\textbf{Return:}} $\mathbf c=\mathbf r$.
\end{enumerate}
\end{alg}

As an intermediate result, from the forward step we already know if the coset has a correctable leader if $w_\ell\leq t=\lfloor\frac{(d-1)}{2}\rfloor $, in that case the backward step gives us a unique solution, if $\ell> t$ then there could be multilple ways of doing the backtracking step depending on the number of leaders in the $\ell$th-coset. Also this algorithm can be use to answer the $t$-CWP problem.

 Note that the backward step is just the  l-GDDA in Algorithm~\ref{alg:l}. As pointed by Liebler \cite{Liebler} in each step of the backtracking procedure we change of coset till we arrive to the $\overline{\mathbf 0}$ coset.

\subsection{Border reduction}
 
 Now taking into account the information on the border of the code $B(\mathcal C)$, we can make a similar reduction substituting in each step the first component of an element of the border by the second one.  More formally,  let $(\mathbf b_1, \mathbf b_2)=\mathbf b\in B(\mathcal C)$, we define the \emph{head}  and the \emph{tail} of $\mathbf b$  as 
$$\mathrm{head}(\mathbf b)=\mathbf b_1,\,\mathrm{tail}(\mathbf b)= \mathbf b_2\in \mathbb F_2^n.$$ 
As pointed before   $\mathrm{head}(\mathbf b)+\mathrm{tail}(\mathbf b)$ is a codeword of $\mathcal C$ for all $\mathbf b\in B(\mathcal C)$ and by its definition (\ref{def:bor1})  it is clear that the information in the border allows complete decoding. The information in the border is somehow redundant, we can reduce the number of codeword in it needed for decoding.

\begin{definition}
A set $R(\mathcal C)$ is the \emph{reduced border for the code} $\mathcal C$
with respect to the order $\prec$ if $R(\mathcal C)\subseteq  B(\mathcal C)$ and it fulfills the following
conditions:
\begin{enumerate}
\item For each pair $(\mathbf n, \mathbf e_i )$  such that $\mathbf n + \mathbf e_i$ is a head in $B(\mathcal C)$ 
       there exists an element in $R(\mathcal C)$ such that its head is $\mathbf h$ where
       $$\mathrm{supp}(\mathbf h) \subseteq \mathrm{supp}(\mathbf n + \mathbf e_i ).$$
\item  Given two elements in $R(\mathcal C)$ and $\mathbf h_1$ , $\mathbf h_2$ their heads, then we
       have that 
$$\mathrm{supp}(\mathbf h_1)\not \subseteq \mathrm{supp}(\mathbf h_2) \hbox{ and } \mathrm{supp}(\mathbf h_2)\not \subseteq \mathrm{supp}(\mathbf h_1).$$
 
\end{enumerate}
 Thus $R(\mathcal C)$  is the set with smallest cardinal that allows us a gradient-like test set decoding using reductions.
\end{definition}

\begin{proposition}
 Let us consider the set of codewords in $\mathcal C$ given by
\begin{equation}
\mathrm{Min}_{red} (\mathcal C) = \{\mathrm{head}(\mathbf b)+\mathrm{tail}(\mathbf b) \mid \mathbf b \in  R(\mathcal C)\}\subseteq \mathcal C.
\end{equation}
Then $\mathrm{Min}_{red} (\mathcal C)\subseteq \mathcal M_\mathcal C$.
\end{proposition}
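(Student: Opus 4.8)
The plan is to reduce the statement to a single claim about generators: for every $\mathbf b=(\mathbf h,\mathbf t)\in R(\mathcal C)$, with $\mathbf h=\mathbf n+\mathbf e_i$ and $\mathbf t=\phi(\mathbf n,\mathbf e_i)$ lying in a common coset, the codeword $\mathbf c=\mathbf h+\mathbf t$ is minimal. Note first that $\mathbf h\neq\mathbf t$ forces $\mathbf c\in\mathcal C\setminus\{\mathbf 0\}$. The one external fact I would lean on is that the set $N$ of normal forms for the degree compatible order $\prec$ is an \emph{order ideal} under support inclusion: if $\mathbf w\in N$ and $\mathrm{supp}(\mathbf w')\subseteq\mathrm{supp}(\mathbf w)$ then $\mathbf w'\in N$. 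This is the staircase property of standard monomials, and it says that coset leaders are downward closed. Throughout, \emph{leader} must mean the normal form for $\prec$, so that each coset has a \emph{unique} one; the argument repeatedly identifies two leaders of a single coset, and that step fails if we only assume minimum weight.

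The first and central step is a structural lemma: the head $\mathbf h$ of any element of $R(\mathcal C)$ is a \emph{minimal non-leader}, i.e.\ $\mathbf h\notin N$ while every proper subword of $\mathbf h$ lies in $N$. To see this, suppose some $\mathbf h+\mathbf e_j$ with $j\in\mathrm{supp}(\mathbf h)$ were not a normal form. Then among its subwords there is a support-minimal non-leader $\mathbf w_0$, and such a $\mathbf w_0$ is automatically a head of $B(\mathcal C)$, since writing $\mathbf w_0=(\mathbf w_0+\mathbf e_l)+\mathbf e_l$ for $l\in\mathrm{supp}(\mathbf w_0)$ gives $\mathbf w_0+\mathbf e_l\in N$ by minimality. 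Condition (1) of the reduced border then produces a head $\mathbf h_2$ of $R(\mathcal C)$ with $\mathrm{supp}(\mathbf h_2)\subseteq\mathrm{supp}(\mathbf w_0)\subsetneq\mathrm{supp}(\mathbf h)$, contradicting the antichain condition (2). Hence all proper subwords of $\mathbf h$ are normal forms.

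Using this lemma I would next show that the supports are disjoint, $\mathrm{supp}(\mathbf h)\cap\mathrm{supp}(\mathbf t)=\emptyset$, so that $\mathrm{supp}(\mathbf c)=\mathrm{supp}(\mathbf h)\sqcup\mathrm{supp}(\mathbf t)$: if $j$ lay in both supports, then $\mathbf h+\mathbf e_j$ is a proper subword of $\mathbf h$, hence a normal form, while $\mathbf t+\mathbf e_j$ is a subword of the leader $\mathbf t$, hence also a normal form; since their sum is $\mathbf c\in\mathcal C$ they represent one coset and must coincide, forcing $\mathbf h=\mathbf t$. For minimality I then argue by contradiction: assume $\mathbf c_0\in\mathcal C\setminus\{\mathbf 0\}$ with $\mathrm{supp}(\mathbf c_0)\subsetneq\mathrm{supp}(\mathbf c)$, and set $A=\mathrm{supp}(\mathbf c_0)\cap\mathrm{supp}(\mathbf h)$, $B=\mathrm{supp}(\mathbf c_0)\cap\mathrm{supp}(\mathbf t)$. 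If $A=\emptyset$, then $\mathbf t+\mathbf c_0$ is a subword of the leader $\mathbf t$ in the coset of $\mathbf t$, hence equals $\mathbf t$, so $B=\emptyset$ and $\mathbf c_0=\mathbf 0$, absurd. If $A\neq\emptyset$, restrict $\mathbf h$ to $\mathrm{supp}(\mathbf h)\setminus A$ and $\mathbf t$ to $\mathrm{supp}(\mathbf t)\setminus B$: the former is a proper subword of $\mathbf h$ and the latter a subword of $\mathbf t$, so both are normal forms, and their sum is exactly $\mathbf c+\mathbf c_0\in\mathcal C$. Being normal forms of one coset they coincide, yet their supports $\mathrm{supp}(\mathbf h)\setminus A$ and $\mathrm{supp}(\mathbf t)\setminus B$ are disjoint, so both are empty; this forces $\mathrm{supp}(\mathbf c_0)=\mathrm{supp}(\mathbf c)$, the final contradiction. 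Therefore $\mathbf c\in\mathcal M_\mathcal C$.

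I expect the main obstacle to be the structural lemma, namely translating the abstract reduced border conditions (1)--(2) into the staircase language of minimal non-leaders; this is where conditions (1) and (2) do their real work. Once that lemma is secured, the order-ideal property of $N$ together with the uniqueness of normal forms turns the remaining support bookkeeping into routine symmetric-difference computations, and the disjointness and minimality steps follow almost mechanically.
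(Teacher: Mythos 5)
Your proof is correct, and while it shares the paper's master plan --- assume a nonzero codeword $\mathbf c_0$ with $\mathrm{supp}(\mathbf c_0)\subsetneq\mathrm{supp}(\mathbf c)$ and split everything along the partition $\mathrm{supp}(\mathbf h)\sqcup\mathrm{supp}(\mathbf t)$ --- the execution is genuinely different and substantially more complete. The paper splits the smaller codeword itself, $\mathbf c'=\mathbf c_1+\mathbf c_2$ with $\mathrm{supp}(\mathbf c_1)\subseteq\mathrm{supp}(\mathrm{head}(\mathbf b))$ and $\mathrm{supp}(\mathbf c_2)\subseteq\mathrm{supp}(\mathrm{tail}(\mathbf b))$, takes $\mathbf m$ to be the $\prec$-maximum of this pair (hence not a normal form, since $\mathbf c_1$ and $\mathbf c_2$ lie in one coset), and simply asserts that $\mathbf m$ being ``smaller'' than head and tail ``contradicts the fact that $R(\mathcal C)$ is reduced''; the disjointness of the two supports, the downward closure of $N$ under support inclusion, and the precise way conditions (1)--(2) yield the contradiction are all left implicit (the text even carries an apparent typo, $\mathrm{supp}(\mathbf c)\cap\mathrm{supp}(\mathrm{head}(\mathbf b))$ where $\mathrm{supp}(\mathbf c')$ must be meant, and $\mathbf c_2=\mathbf c-\mathbf c_1$ where $\mathbf c'-\mathbf c_1$ must be meant). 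You instead isolate a structural lemma --- conditions (1)--(2) force every head of $R(\mathcal C)$ to be a support-minimal non-leader --- which is exactly the statement the paper's one-line contradiction secretly relies on, then prove disjointness, and then work with the complementary pieces $\mathbf h',\mathbf t'$ summing to $\mathbf c+\mathbf c_0$, which are annihilated by transversality plus disjointness of supports, with no appeal to the $\prec$-comparison device at all. What your route buys is rigor and transparency: the role of each reduced-border axiom is localized in the lemma, the staircase property of $N$ is flagged as the single external ingredient (and it is legitimate here, since $N$ consists of the standard monomials of $\mathcal I_{\mathcal C}$ for a degree-compatible order, and standard monomials form an order ideal under divisibility), and the degenerate case $A=\emptyset$ is handled explicitly, whereas the paper's sketch would need the same staircase fact to dispatch it. Your warning that ``leader'' must mean the unique normal form, not merely a minimum-weight coset element, is also well taken --- both your proof and the paper's repeatedly identify two elements of $N$ in a common coset, and that step genuinely fails for an arbitrary choice of coset leaders. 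What the paper's route buys, by comparison, is only brevity.
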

\begin{proof}
 Let $\mathrm{head}(\mathbf b)+\mathrm{tail}(\mathbf b)=\mathbf c$ where $\mathbf b \in  R(\mathcal C)$ and  suppose $c\notin  \mathcal M_\mathcal C$,
then there exists $\mathbf c^\prime \in\mathcal C$ such that $\mathrm{supp}(\mathbf c^\prime) \subset \mathrm{supp}(\mathbf c)$. Let $\mathbf c_1$ be a vector such
that $\mathrm{supp}(\mathbf c_1) = \mathrm{supp}(\mathbf c) \cap  \mathrm{supp}(\mathrm{head}(\mathbf b))$, thus $\mathbf c_2 = \mathbf c -\mathbf c_1$ fulfills
$\mathrm{supp}(\mathbf c_2)  \subset  \mathrm{supp}(\mathrm{tail}(\mathbf b))$. Let $\mathbf m$ be the maximum between $\mathbf c_1$ and $\mathbf c_2$,
therefore $\mathrm{supp}(\mathbf m) \subset \mathrm{supp}(\mathbf c)$, and $\mathbf m$ is smaller than $\mathrm{head}(\mathbf b)$ and $\mathrm{tail}(\mathbf b)$
which contradicts the fact that $R(\mathcal C)$ is reduced.

\end{proof}

Therefore the set $\mathrm{Min}_{red} (\mathcal C)$ is a minimal test set w.r.t. the order $\prec$ given by minimal codewords that allow  the ts-GDD algorithm stated in  Algorithm \ref{alg:ts}. It can be also seen as a test set for the modular integer program in Equation (\ref{eq:ip}).

\section*{Conclusions}

We have shown an unified approach via the Gr\"obner presentation of a code to two  gradient descent decoding algorithms, one that the search is done changing the coset representative (l-GDDA) and the one given by descending within the same coset (ts-GDDA) that were claimed to be of different nature.  This two algorithms come from two ways of computing the reduction of a monomial modulo a binomial ideal associated to the code. Unfortunately there are some obstructions for  generalizing this approach in a straightforward way to non binary codes mainly motivated by the non-admissibility of the ordering needed for decoding (see \cite{BBM}). Further research lines of the authors point to generalizing the border basis for the non binary case in order to describe the set of minimal codewords of a code.

\end{document}